\renewcommand\labelenumi{(\roman{enumi})}
\renewcommand\theenumi\labelenumi
\definecolor{mygray}{RGB}{103, 106, 110} 
\definecolor{mycomment}{RGB}{163, 35, 80}
\definecolor{mynodecolor}{RGB}{184, 204, 237}
\definecolor{fillcolor1}{RGB}{101, 166, 143}
\definecolor{fillcolor2}{RGB}{144, 238, 144}
\definecolor{fillcolor3}{RGB}{77, 87, 129}
\definecolor{fillcolor4}{RGB}{178, 178, 198}
\pgfplotsset{compat=1.8}
\newcommand{\R}{\mathbb{R}}
\newcommand*\diff{\mathop{}\!\mathrm{d}}
\DeclarePairedDelimiterX{\scalar}[2]{\langle}{\rangle}{#1, #2}
\newcommand{\norm}[1]{\left\lVert#1\right\rVert}
\newcommand{\abs}[1]{\left\lvert #1 \right\rvert}
\newcommand{\PPAD}{\ensuremath{\mathsf{PPAD}}}
\DeclareMathOperator{\Opt}{OPT}
\renewcommand{\epsilon}{\varepsilon}
\newcommand{\source}{s}
\newcommand{\sink}{t}
\newcommand{\StratSpace}{\Sigma}
\newcommand{\horizon}{[0,H)}
\renewcommand{\paragraph}[1]{\vspace{0.5em}\noindent\textbf{#1}\;}
\title{Atomic Splittable Flow Over Time Games}
\titlerunning{Atomic Splittable Flow Over Time Games}
\author{Antonia Adamik}{Technische Universität Berlin, Germany}{antonia.mp.adamik@campus.tu-berlin.de}{}{}
\author{Leon Sering}{ETH Zurich, Switzerland}{leon@sering.eu}{https://orcid.org/0000-0003-2953-1115}{Funded by the Deutsche Forschungsgemeinschaft (DFG, German Research Foundation) under Germany's Excellence Strategy – The Berlin Mathematics Research Center MATH+ (EXC-2046/1, project ID: 390685689).}
\authorrunning{A. Adamik and L. Sering}
\keywords{Flows Over Time \and Deterministic Queuing \and Atomic Splittable Games \and Equilibria \and Traffic \and Cooperation.}
\begin{document}
\maketitle              

\begin{abstract} 
In an atomic splittable flow over time game, finitely many players route flow dynamically through a network, in which edges
are equipped with transit times, specifying the traversing time, and with capacities, restricting flow rates.
Infinitesimally small flow particles controlled by the same player arrive at a constant rate at the player's origin and the
player's goal is to maximize the flow volume that arrives at the player's destination within a given time horizon.
Here, the flow dynamics are described by the deterministic queuing model, i.e., flow of different players merges
perfectly, but excessive flow has to wait in a queue in front of the bottle-neck. In order to determine Nash equilibria
in such games, the main challenge is to consider suitable definitions for the players' strategies, which depend on the
level of information the players receive throughout the game. For the most restricted version, in which the players
receive no information on the network state at all, we can show that there is no Nash equilibrium in general, not even for
networks with only two edges. However, if the current edge congestions are provided over time, the players can adapt
their route choices dynamically. We show that a profile of those strategies always lead to a unique feasible flow over
time. Hence, those atomic splittable flow over time games are well-defined. For parallel-edge networks Nash equilibria
exists and the total flow arriving in time equals the value of a maximum flow over time leading to a price of anarchy
of~$1$.
\end{abstract}


\section{Introduction} In \emph{static routing problems}, traffic is to be routed through a network at minimum total
cost. The cost or traveling time on each edge depends on its congestion. However, the assumption that an optimal routing
might be implemented by some superordinate authority is not realistic in many settings. More likely, each network
participant selfishly chooses a path in order to minimize their own traveling time. In general, the lack of coordination
causes a higher total traveling time. To quantify this decrease in performance, the total traveling time of a
\emph{Wardrop equilibrium} \cite{Wardrop_1952} is compared to the total traveling time of the \emph{system optimum}. The
ratio between a worst equilibrium and the system optimum is the \emph{price of anarchy} \cite{Roughgarden_Tardos_2002}.
This basic model can be extended in several ways. In this research work we want to focus on two aspects.

The first aspect is the temporal dimension. Vehicles in real traffic need time to move from the origin to the
destination and the traveling time increases with the degree of congestion, which varies over time. In other words, the
traffic flow does not traverse the network instantaneously, but progresses at a certain pace. In addition, the effects
of a routing decision in one part of the network take some time to spread across the network as a whole. In order to
mathematically model this, we add a time component, transforming static flows into \emph{flows over time}. Here, every
infinitesimally small flow particle needs time to traverse the network and the flow rates of the edges are restricted by
capacities. By assuming that each particle acts selfishly, we can consider dynamic equilibria, which are called
\emph{Nash flows over time} \cite{Koch_Skutella_2011}.

For the second aspect, note that in real-world traffic the activity of a single road user has in most cases a negligible
impact on the performance of the system as a whole. Furthermore, the assumption of independent and selfish particles is
not justifiable in all applications: Networks where participants control a flow of positive measure are not covered. For
instance, in transportation networks freight units might not act selfishly; they are controlled by freight companies
that each control a significant amount of traffic. This leads to the second aspect, cooperative behavior among groups of
network participants. To integrate this into the mathematical model, flow particles are allowed to form coalitions. In
so-called \emph{atomic splittable routing games} we consider a finite number of \emph{atomic} players (the coalitions),
each controlling a positive amount of flow volume that has to be \emph{routed} through the network but can be
\emph{split up} and divided over different routes.

In this paper we want to combine both aspects, as depicted in \Cref{fig:relationship_between_models}. That means, in
contrast to Nash flows over time, sets of particles form coalitions which will be represented by superordinate players.
In contrast to atomic splittable routing games, flow is modeled by a flow over time and players' decisions might be
adapted to new situations. This extension covers a greater variety of scenarios. For example, road traffic models in
which most drivers are guided by navigation systems (e.g. Google Maps, TomTom, Here, Garmin) can be modeled by covering the
strategic behavior of the firms: The decisions of a single driver do not have a big impact on the city's traffic, but
Google Maps decisions do; and TomTom might actually want to react. The situation will even intensify in the future with
the rise of autonomous driving, as the decision making process is shifted to the navigation systems. We would like to
point out that -- in contrast to previous models -- the interests of navigation companies and the general public are
assumably in line: The cooperation of the users of a navigation system could reduce the average driving time per company
on the one hand and the driving time in general on the other hand. This would lead to lower energy consumption, and
therefore, lower emissions of polluting substances.

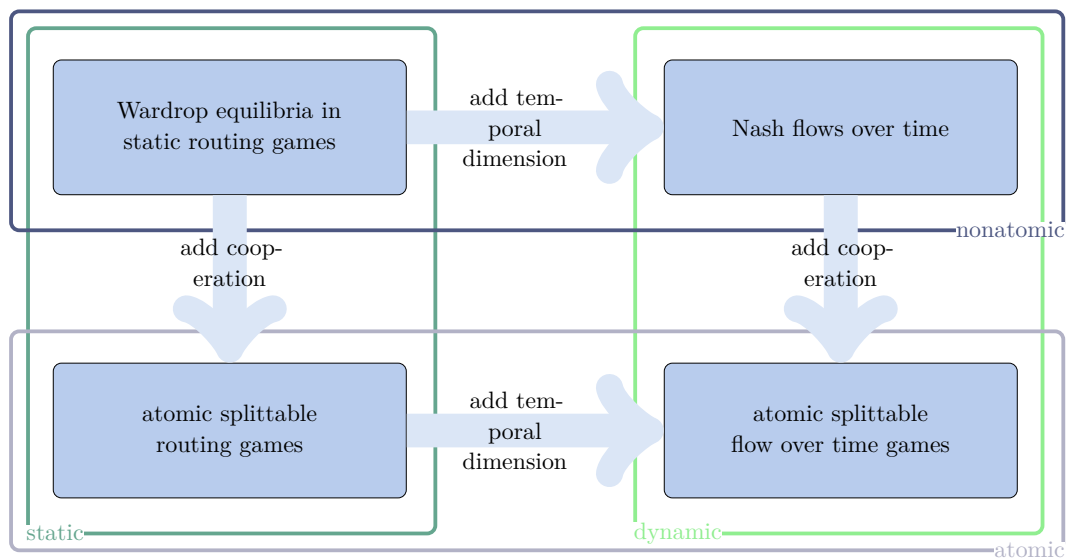
\begin{figure}[t]
    \centering
    \resizebox{\textwidth}{!}{
\begin{tikzpicture}[
roundrectangle/.style={rectangle, draw, fill=mynodecolor, rounded corners=.8ex, minimum width = 5.2cm, minimum height = 2cm, text width=4cm, align=center},
arrowstyle/.style={draw=mynodecolor!50, line width = 0.5cm},
arrownodestyle/.style={text width = 2cm, align=center, midway},
setstyle/.style={ultra thick, rectangle, draw, rounded corners=.8ex, fill opacity=0.1},
setannotation/.style={inner sep = 0pt, fill=white, shift={(0,0.04)}}
]

\node[setstyle, draw=fillcolor1, anchor=north west, minimum height = 7.5cm, minimum width = 6cm] (static set) at (-8,3.5) {};
\node[setannotation, text=fillcolor1, anchor= west] (static) at (static set.south west) {static};

\node[setstyle, draw=fillcolor2, anchor=north east, minimum height = 7.5cm, minimum width = 6cm ] (dynamic set) at (7,3.5) {};
\node[setannotation, text=fillcolor2, anchor=west] (dynamic) at (dynamic set.south west) {dynamic};

\node[setstyle, draw=fillcolor3,anchor=north west, minimum height = 3.25cm, minimum width = 15.5cm] (nonatomic set) at (-8.25,3.75) {};
\node[setannotation, text=fillcolor3, anchor=east] (nonatomic) at (nonatomic set.south east) {nonatomic};

\node[setstyle, draw=fillcolor4, anchor=north west, minimum height = 3.25cm, minimum width = 15.5cm] (atomic set) at (-8.25,-1) {};
\node[setannotation, text=fillcolor4, anchor=east] (atomic) at (atomic set.south east) {atomic};

\node[roundrectangle] (problem1) at (-5,2) {Wardrop equilibria in static routing games};
\node[roundrectangle] (problem2) at (4,2) {Nash flows over time};
\node[roundrectangle] (problem3) at (-5,-2.5) {atomic splittable routing games};
\node[roundrectangle] (problem4) at (4,-2.5) {atomic splittable flow over time games};

\draw[arrowstyle, ->] (problem1.south) -- (problem3.north) node [arrownodestyle, shift={(0,0.25)}] {add cooperation};
\draw[arrowstyle, ->] (problem1.east) -- (problem2.west) node [arrownodestyle, shift={(-0.3,0)}] {add temporal dimension};
\draw[arrowstyle, ->] (problem3.east) -- (problem4.west) node [arrownodestyle, shift={(-0.3,0)}] {add temporal dimension};
\draw[arrowstyle, ->] (problem2.south) -- (problem4.north) node [arrownodestyle, shift={(0,0.25)}] {add cooperation};

\end{tikzpicture}}
    \caption{Relationship between equilibrium situations in static routing games and atomic splittable flow over time games.}
    \label{fig:relationship_between_models}
\end{figure}

It turned out to be surprisingly challenging to consider equilibria in atomic splittable flow over time games. For this
reason the overall goal is to define a solid model on these dynamic games and to present some preliminary observations,
as well as some non-trivial first results, which serve as a basis for further research.

\paragraph{Related work.} Static network flows have been studied for quite a while. A lot of pioneer work is due to
Ford and Fulkerson, who also were the first to introduce \emph{flows over
time}~\cite{ford1958constructing,ford1962flows}. They provided an efficient algorithm for a \emph{maximum flow over
time}, which sends the maximal flow volume from a source to a sink given a finite time horizon. Closely related, a
\emph{quickest flow} minimizes the arrival time of the latest particle for a given flow volume. This can be achieved by
combining the algorithm of Ford and Fulkerson with a binary search
framework~\cite{burkard1993quickest,Fleischer_Tardos_1998}. For single-source and single-sink networks it is furthermore
possible to construct a flow over time that is maximal for all time horizons (and quickest for all flow volumes)
simultaneously. The existence of these so-called \emph{earliest arrival flows} was shown by Gale in
1959~\cite{Gale_1959}. They can be computed algorithmically by using the successive shortest path algorithm in the
residual networks~\cite{minieka1973}. For more details and further references to literature on optimization problems in
the flow over time setting, we refer to the survey of Skutella~\cite{skutella2009introduction}.

Koch and Skutella \cite{Koch_Skutella_2011} approach flows over time from a game theoretic perspective
by introducing \emph{Nash flows over time}. In their model, every infinitesimally small flow particle is considered to
be a player aiming to reach the common destination as early as possible. As the flow rate entering an edge could exceed
its capacity, they considered the \emph{deterministic queuing model}~\cite{vickrey1969congestion}, which causes the
excess flow to wait in a queue in front of the bottle-neck. Existence of these \emph{dynamic equilibria}
were shown by Cominetti et al.~\cite{cominetti2011existence}. Several other aspects, including uniqueness, continuity, long term behavior,
multi-terminals, spillback and price of anarchy, were studied in recent years
\cite{bhaskar2015stackelberg,cominetti2021long,Cominetti_Correa_Larre_2015,Correa_Cristi_Oosterwijk_2019,israel2020impact,olver2022continuity,pham2020dynamic,sering2018multiterminal,sering2019nash}; see \cite{sering2020nash} for an overview. 
A slightly different approach for user equilibria was presented by Graf et al.~\cite{graf2020price,graf2022finite,graf2020ide}. They use the same
flow over time model, except that particles do not anticipate the future evolution of the flow, but instead choose
quickest routes according to current waiting times. As these delays may be subject to change, each particle can adapt
its route choice along the way.

Atomic splittable congestion games for static network flows can be described as Wardrop equilibria~\cite{Wardrop_1952}
with coalitions~\cite{Haurie_Marcotte_1985,marcotte1987algorithms}; see also the survey of Correa and Stier-Moses
\cite{correa2010wardrop}. For these games, Nash equilibria always exist, which can be shown by standard fixed point
techniques~\cite{rosen1965existence}. Altman et al.~\cite{Altman_Basar_Jimenez_Shimkin_2002} showed that equilibria are
unique if the delay functions are polynomials of degree less than $3$. Regarding more general delay functions, Bhaskar
et al.~\cite{Bhaskar_2015} showed that for two players a unique equilibrium exists if, and only if, the network is a
\emph{generalized series-parallel graph}. Harks and Timmermans \cite{harks2018uniqueness} showed uniqueness of
equilibria when the players' strategy space has a bidirectional flow polymatroid structure.
Roughgarden~\cite{Roughgarden_2005} showed that the inefficiency of a system decreases with an increasing degree of
cooperation. He showed that the price of anarchy for classes of traveling time functions in the atomic case is bounded
by the price of anarchy for the same class of functions in the nonatomic case. Further research on the price of anarchy
in static atomic splittable games is due to Cominetti et al.~\cite{cominetti2009impact},
Harks~\cite{harks2011stackelberg}, and Roughgarden and Schoppmann~\cite{roughgarden2015local}. Computational-wise
Cominetti et al.~\cite{cominetti2009impact} showed that equilibria can be computed efficiently when the cost functions
are affine and player-independent. Regarding player-specific affine costs, Harks and
Timmermans~\cite{harks2017equilibrium} described a polynomial algorithm for parallel-edge networks, and Bhaskar and
Lolakapuri \cite{bhaskar2018equilibrium} presented an exponential algorithm for general convex functions. Very recently,
Klimm and Warode showed that the computation with player-specific affine costs is \PPAD-complete for general networks
\cite{klimm2020complexity}.

We should also mention that the combination of cooperation and temporal dimension has been considered for discrete packet
routing games; see Peis et al.~\cite{peis2018oligopolistic}. Here, each player controls a finite amount of packets, which
has to be routed through a network in discrete time steps. For more results on competitive packet routing models
we refer to Hoefer et al.~\cite{hoefer2009competitive} (continuous-time packets model) and Harks et al.~\cite{harks2018competitive} (discrete-time packet model).

\paragraph{Contribution and overview.} In \Cref{sec:model}, we introduce all notations and formally describe atomic
splittable flow over time games for general networks. The players' strategies determine how much flow they assign to
each edge for every point in time during the game depending on available information on the current state. We consider
two very natural sets of information. The first consists solely of the current time. In \Cref{sec:naive_strategy_space}
we show that this setting does not allow for a Nash equilibrium in general, not even in a network with only two parallel
edges. This motivates to consider more complex information models. Hence, \Cref{sec:strategy_space_with_exit_times} is
dedicated to the second set of information which additionally comprises the current congestion of the edges in form of
the exit times. As the first main result we show that every strategy profile results in a unique feasible flow over time
by formulating the conditions as initial value problem and applying the Picard-Lindel\"of theorem. For parallel-edge
networks we show that Nash equilibria always exist by explicitly stating a strategy profile. Furthermore, we prove for that setting that all Nash equilibria have the same objective equal to the system optimum (i.e., the value of a maximum flow over time) implying that the price of anarchy for those networks is~$1$. Finally, we suggest
further areas of research in \Cref{sec:final}.

\section{Atomic Splittable Flow Over Time Games} \label{sec:model} 
In this section, we are going to properly define
atomic splittable flow over time games. The two main aspects are the multi-commodity flow dynamics
(see \cite{sering2018multiterminal}) and the players' strategies, which depend on the information received over time.

\paragraph{Game setting.} A \emph{network} consists of a directed graph $G = (V, E)$, where every edge $e \in E$ is
equipped with a \emph{transit time} $\tau_e > 0$ and a \emph{capacity} $\nu_e >0$. For a node $v \in V$ we denote the
set of all incoming edges by $\delta_v^-$ and the set of outgoing edges by $\delta_v^+$.

For an \emph{atomic splittable flow over time game} we consider a finite set of players $P$, each with an
origin-destination pair $s_j$-$t_j$ and a \emph{supply rate} $d_j >0$, as well as a time horizon~$H > 0$. We assume that
$s_j$ can reach $t_j$ within the network.

The flow of player $j$ enters the network via node $s_j$ at a rate of $d_j$ from time~$0$ onwards. The goal is to
maximize the cumulative flow volume reaching node $t_j$ before the end of the game at time $H$.

\paragraph{Flow dynamics.} In the deterministic queuing model the total inflow rate into an edge is bounded by the
capacity. If the capacity is exceeded, a queue builds up in which all entering particles have to wait in line.
Afterwards each particle needs $\tau_e$ time to traverse the edge before it can enter the next edge along the path; see
\Cref{fig:queue_and_edge}.  The dynamics of this process are formally defined as follows:

A \emph{flow over time} is described by a family of Lebesgue-integrable functions $f = (f^+, f^-) = (f_{e,j}^+,
f_{e,j}^-)_{e \in E, j \in P}$, where $f_{e,j}^{+}, f_{e,j}^{-}: [0, H) \to \mathbb{R}_{\geq 0}$ denote the rate at
which flow controlled by player $j$ enters and leaves edge $e$. The flow rates summed over all players $\sum_{j \in P}
f_{e,j}^+ (\theta)$ and $\sum_{j \in P} f_{e,j}^- (\theta)$ are called \emph{total in- and outflow rates}. The
\emph{cumulative in- and outflows}, i.e., the amount of player $j$'s flow that has entered or left an edge $e$ up to
time $\theta$, is denoted by $F_{e,j}^+ (\theta) = \int_{0}^{\theta} f_{e,j}^+ (\vartheta) \diff \vartheta$ and
$F_{e,j}^- (\theta) = \int_{0}^{\theta} f_{e,j}^- (\vartheta) \diff \vartheta$. Finally, $f^+$, $f^-$, $F^+$ and $F^-$
denote the vectors of (cumulative) in- and outflows with one entry per edge-player-pair.

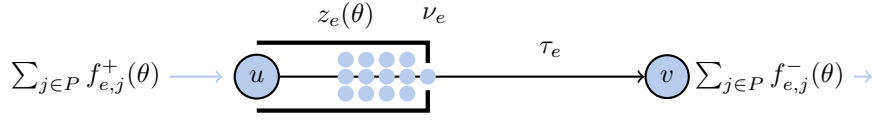
\begin{figure}[t]
    \centering
    \begin{tikzpicture}[scale=0.9,
roundnode/.style={circle, thick, draw=black, fill=mynodecolor},
particle/.style={circle, fill=mynodecolor, minimum size=6pt, inner sep=0pt}
]


\node[roundnode] (tail) at (0,0) {$u$};
\node[roundnode] (head) at (6,0) {$v$};

\node (inflow) at (-2.5,0) {$\sum_{j \in P} f_{e,j}^+(\theta)$};
\node (outflow) at (7.5,0) {$\sum_{j \in P} f_{e,j}^-(\theta)$};
\node (queue) at (1.3, 0.9) {$z_e(\theta)$};
\node (service rate) at (2.6, 0.9) {$\nu_e$};
\node (transit time) at (4.3, 0.4) {$\tau_e$};
\node (blank) at (3,-1) {};

\draw[->, thick] (tail.east) -- (head.west);
\draw[->, thick, mynodecolor] (inflow.east) -- (-0.5,0);
\draw[->, thick, mynodecolor] (outflow.east) -- (9,0);
\draw[ultra thick] (0,0.5) -- (2.5,0.5)-- (2.5,0.2);
\draw[ultra thick] (0,-0.5) -- (2.5,-0.5)-- (2.5,-0.2);


\node[particle] at (2.5,0) {}; %

\node[particle] at (2.2,0.25) {};%
\node[particle] at (2.2,0) {};%
\node[particle] at (2.2,-0.25) {};%

\node[particle] at (1.9,0.25) {};%
\node[particle] at (1.9,0) {};%
\node[particle] at (1.9,-0.25) {};%

\node[particle] at (1.6,0.25) {};%
\node[particle] at (1.6,0) {};%
\node[particle] at (1.6,-0.25) {};%

\node[particle] at (1.3,0.25) {};%
\node[particle] at (1.3,0) {};%
\node[particle] at (1.3,-0.25) {};%

%
%
%
%

\end{tikzpicture}
    \caption[Representation of an edge in the deterministic queuing model]{Representation of an edge in the deterministic
    queuing model: If more flow particles enter edge $e=uv$ within the total inflow rate $\sum_{j \in P} f^+_{e,j}(\theta)$
    than its capacity $\nu_e$ allows to process, they build up a queue, whose current length is given by $z_e(\theta)$.
    Whenever the queue is non-empty at time $\theta$ the total outflow rate $\sum_{j \in P} f^-_{e,j}(\theta + \tau_e)$ at time $\theta + \tau_e$ equals the capacity $\nu_e$.}
    \label{fig:queue_and_edge}
\end{figure}

 Such a family of functions $f$ is a \emph{flow over time} if the following two conditions hold for all
 $e \in E$ and $j \in P$:
 
\emph{Flow conservation} is fulfilled for all $\theta \in [0, H)$:
\begin{equation}\begin{aligned}
\sum\limits_{e \in \delta^+_v} f_{e,j}^{+}(\theta) -
\sum\limits_{e \in \delta^-_v} f_{e,j}^{-}(\theta) =
\begin{cases}
    d_j     &\text{for } v = s_j,\\
    0       &\text{for } v \in V \setminus \set{ s_j,t_j }.
\end{cases}\label{eq:flow_conservation}
\end{aligned}\end{equation}
\emph{Non-deficit constraints} are satisfied for all $\theta \in [0, H - \tau_e)$:
    \begin{equation}
        F_{e,j}^+ (\theta) - F_{e, j}^- (\theta + \tau_e) \geq 0. \label{eq:exit_time_derivative}
    \end{equation}

To track the net flow that is not yet processed and remains in the \emph{queue}, we introduce $z_e(\theta)$ to denote
the \emph{queue length} at time $\theta$. Formally, it is defined as $z_e(\theta) = \sum_{j \in P} F_{e,j}^+(\theta) - \sum_{j \in P} F_{e,
j}^-(\theta + \tau_e)$ for $e \in E$. For a \emph{feasible} flow over time we require that, whenever flow is
waiting in the queue, the edge operates at capacity rate. In other words, for all $\theta \in [0, H - \tau_e)$ and $ e \in
E$, we require
\begin{equation}\begin{aligned} \label{eq:operating_a_capacity}
    \sum_{j \in P} f_{e,j}^-(\theta + \tau_e) = 
    \begin{cases}
        \nu_e       & \text{if } z_e(\theta) > 0, \\
        \min\set{ \sum_{j \in P} f_{e, j}^+(\theta),\, \nu_e }   &\text{else.}
    \end{cases}
\end{aligned}\end{equation}

 The \emph{waiting time} $q_e(\theta)$ experienced by a particle entering edge~$e$ at time~$\theta$ is generally defined
 as the time needed to process the flow present in the queue when the particle enters it. In other words, it is the time
 span between its entrance to and its exit from the queue just before traversing the edge. That is
\[ q_e(\theta) \coloneqq \min \Set{q \geq 0 | \int_{\theta}^{\theta + q} \sum_{j \in P} f_{e, j}^{-} (\vartheta + \tau_e) \diff \vartheta = z_e(\theta)} = \frac{z_e(\theta)}{\nu_e}.\]

The \emph{exit time} $T_e(\theta)$ of a particle entering an edge $e \in E$ at time $\theta$ is given by the sum of the
entrance time $\theta$, its waiting time in the queue $q_e(\theta)$ and the transit time $\tau_e$. Hence, we have
\[T_e(\theta) \coloneqq \theta + \frac{z_e(\theta)}{\nu_e} + \tau_e.\] 
Since $z_e$ can at most decrease at rate $\nu_e$, it
follows that $q'_e(\theta) \geq -1$ and $T'_e(\theta) \geq 0$, which induces that $T_e$ is non-decreasing. Note that
these derivatives exist for almost all $\theta$ due to Lebesgue's differentiation theorem.

Finally, in a feasible flow over time, the flow of different players should merge seamlessly. This means that, at any point in time, a player's
share of the total outflow rate is equal to her share of the total inflow rate back at the time when the flow entered the edge. More precisely, we
require
\begin{equation}
    f_{e,j}^-(\theta) = 
    \begin{cases}
        \frac{f_{e,j}^+(\phi)}{\sum_{j' \in P} f_{e, j'}^+(\phi)} \cdot \sum_{j' \in P} f_{e, j'}^-(\theta) &\text{if } \sum_{j' \in P} f_{e, j'}^+(\phi) > 0, \\
        0 &\text{else,}
    \end{cases} \label{eq:individual_outflows}
\end{equation}
for all $\theta \in \horizon$ and $\phi \coloneqq \max T_e^{-1}(\theta)$. Here, we set $f_{e, j'}^+(\phi) \coloneqq 0$ for $\phi < 0$. Note that $\phi$ denotes the edge-entrance times of all particles leaving the edge at time $\theta$.
Taking the maximum of $T_e^{-1}(\theta)$ simply ensures well-definedness which is required since $T_e$ might not be strictly increasing. 

To conclude we denote the set of all \emph{feasible flows over time} by $\mathcal{F}$, i.e., all $f = (f^+, f^-)$ that
satisfy \eqref{eq:flow_conservation}, \eqref{eq:exit_time_derivative}, \eqref{eq:operating_a_capacity} and
\eqref{eq:individual_outflows}. As the outflow rates are uniquely defined by the inflow rates, we refer to a feasible
flow over time only by the corresponding inflow $f^+$, and write $f^+ \in \mathcal{F}$.

\paragraph{Atomic splittable flow over time games.} 
Let $\rho_j: \mathcal{F} \to \mathbb{R}$ be the function indicating player $j$'s
\emph{payoff} for a given $f \in \mathcal{F}$, which is to be maximized. In general, $\rho$ can be set to various
objective functions (e.g. arrival time of the player's latest particle or the average arrival time), but in this paper we will focus on the
\emph{maximum flow over time problem}.
Each player wants to maximize her amount of flow routed from $s_j$ to
$t_j$ before the end of the game at time $H$:
\[\rho_j(f) = \sum_{e \in \delta^-_{t_j}} F_{e,j}^{-}(H) - \sum_{e \in \delta^+_{t_j}} F_{e,j}^{+}(H).\]
We choose this maximum flow objective as it seems to be the most straight-forward payoff-function. It is conceivable
though, that most results might transfer to quickest flow payoff-functions via a binary-search framework (cf.\ in
non-competitive settings quickest flows are constructed from maximum flows over time via binary-search). But we leave
this for future research.

The \emph{strategy space} is a player's set of viable options in order to maximize
her payoff. A single \emph{strategy} is a complete instruction determining the player's
inflow rates of all times and for all situations possibly occurring.
Formally, the strategy space of player $j$ is a set of functions 
\[\StratSpace_j = \Set{g_j: \mathcal{I} \to [0, 1]^E | \begin{array}{l}
    g_{e,j} \text{ is Lebesgue-integrable for all } e \in E \text{ and }\\[2pt]
    \sum_{e \in \delta^+_{v}} g_{e,j}(I) = 1 \text{ for all } I \in \mathcal{I}, v \in V \setminus\set{t_j}
    \end{array}   },\]
where $\mathcal{I}$ is the set of information available to the players. This set is not well-defined yet, but we will discuss this extensively, and in the end, we will consider two separate definitions for $\mathcal{I}$, one in \Cref{sec:naive_strategy_space} and one in \Cref{sec:strategy_space_with_exit_times}.
Informally, the set of information is used to delineate what defines a situation and how it is perceived by the players. The interpretation is as
follows. For every information $I \in \mathcal{I}$, the value $g_{e,j}(I)$ determines which proportion of player $j$'s
flow arriving at $v$ is distributed onto the outgoing edges $e \in \delta_v^+$. We use proportions that sum up to $1$
instead of the inflow rates, as this easily ensures that flow conservation is fulfilled at all times. Note that the
received information $I$ can depend on the current time $\theta$ and on the flow over time~$f$ itself. As we normally do
not want the players to see the future, it should only depend on the flow over time up to time $\theta$. In general
it might be player dependent, hence, we write $I_j(\theta, f)$.

In order to turn a \emph{strategy profile} $g =
(g_j)_{j \in P} \in \bigtimes_{j \in P} \Sigma_j$ into a feasible flow over time $f$, we consider the following system of equations that has to be satisfied for all $\theta \in [0, H)$:
\begin{equation} \label{eq:transformation_equation}
\begin{aligned}
f_{e, j}^+(\theta) &= g_{e, j}(I_j(\theta, f)) \cdot \bigg(\sum_{e' \in \delta^-_{s_j}} f_{e', j}^-(\theta) + d_j\bigg)  &&\text{ for }  j \in P, e \in \delta_{s_j}^+,\\
f_{e, j}^+(\theta) &= g_{e, j}(I_j(\theta, f)) \cdot \sum_{e' \in \delta^-_{u}} f_{e', j}^-(\theta) && \hspace{-1.5cm}\text{ for }  j \in P, e = uv \in E \setminus (\delta_{s_j}^+ \cup \delta_{t_j}^+),\\
f_{e, j}^+(\theta) &= 0  &&\hspace{-1.5cm}\text{ for }  j \in P, e \in \delta_{t_j}^+.
\end{aligned}
\end{equation}
Note that in order to keep it as simple as possible, we assume that flow of player~$j$ reaching $t_j$ leaves the network
immediately, hence, $f_{e,j}^+(\theta) = 0$ for all $e \in \delta_{t_j}^+$. This leads to a simpler payoff for all players $j$ of
$\rho_j(f) = \sum_{e \in \delta^-_{t_j}} F_{e,j}^{-}(H)$.

Since the inflow rates $f^+(\theta)$ might depend on the flow over time itself up to time~$\theta$, it is not guaranteed that
this system of equations yields a feasible flow over time as solution. 

To illustrate this issue assume for example a game with a single player in a network with two parallel edges $e_1$ and $e_2$ where $e_1$ has a tiny capacity. If the player's strategy is to send everything into $e_1$ as long as there is no waiting time on $e_1$, and otherwise send everything into $e_2$, this would not result in a feasible flow over time. To see this, suppose that the inflow into $e_1$ would be positive on a measurable set, which would immediately cause a positive waiting time on $e_1$. Hence, the strategy says that no flow is sent into $e_1$. On the other hand, if no flow is sent into $e_1$ at all there would not be any waiting time, leading again to a contradiction.

For that reason, the key challenge is to find a
reasonable set of information $\mathcal{I}$ and strategy spaces $\Sigma_j$, such that, on the one hand, there exists a unique
(up to a null set) feasible flow over time satisfying \eqref{eq:transformation_equation} for every given strategy
profile~$g$, and on the other hand, a Nash equilibrium exists. Note that we only consider \emph{pure} Nash equilibria.

In the following two sections we discuss two very natural sets of information.

\section{Temporal Information Only} \label{sec:naive_strategy_space} First, we want to examine the simplest set of
information, namely the current point in time only: We set $I_j(\theta, f) \coloneqq \theta$ for all players $j \in P$.
That means the players do not receive any information about the current state of the flow, but instead have to decide at
the beginning of the game along which routes their flow particles are routed. In this case it is guaranteed that there
exists a unique feasible flow over time that satisfies \eqref{eq:transformation_equation}, as we can simply set $f_{e,
j}^+(\theta)$ to the right sides of \eqref{eq:transformation_equation} (formally this also follows from
\Cref{thm:flow_depends_on_time_and_exit_times}). However, Nash equilibria do not exist in general, which is already true
for very simple networks.

\begin{theorem}\label{thm:no_equilibrium}
With temporal information only, there exists no Nash equilibrium in an atomic splittable flow over time game of two
players $p_1$ and $p_2$ with identical supply rates $d_{1}, d_{2}\!>\!0$, on a network with two identical parallel edges
$e_1, e_2$ from $\source\;$( $\!=\!\source_{p_1}\!=\!\source_{p_2})$ to $\sink\;$( $\!=\!\sink_{p_1}\!=\!\sink_{p_2})$
with $\nu_{e_1} \!=\! \nu_{e_2} < d_{1} \!=\! d_{2}$ and $\tau_{e_1} \!=\! \tau_{e_2} < H$.
\end{theorem}

The key proof idea is the following. To every strategy of the competitor, a player can choose a response strategy that
yields a payoff of strictly more than half the total optimum, i.e., the flow value of a maximum flow over time with
inflow rate $d_1 + d_2$. This can be achieved by first \emph{mirroring} the competitor's strategy (copying the strategy
but interchanging the roles of $e_1$ and $e_2$) and then shifting some flow in the beginning of the game. This shift
changes the points in time $r_i$ when the last particles arriving at $t$ just in time, enter edge $e_i$ for $i = 1,2$.
As the new values $\hat r_1$ and $\hat r_2$ are not equal anymore the responding player can modify the inflow rates
between $\hat r_1$ and $\hat r_2$ in order to squeeze in a little more flow than the competitor, which will then be a little more than $\Opt/2$.

As in turn the competitor can again choose a strategy with a payoff
of more than half the optimum, this immediately implies that a Nash equilibrium cannot exist. This idea for the network
given in \Cref{fig:standard_graph} is visualized in \Cref{fig:mirror_strategy,fig:winning_strategy}.

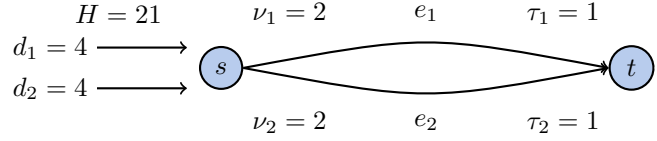
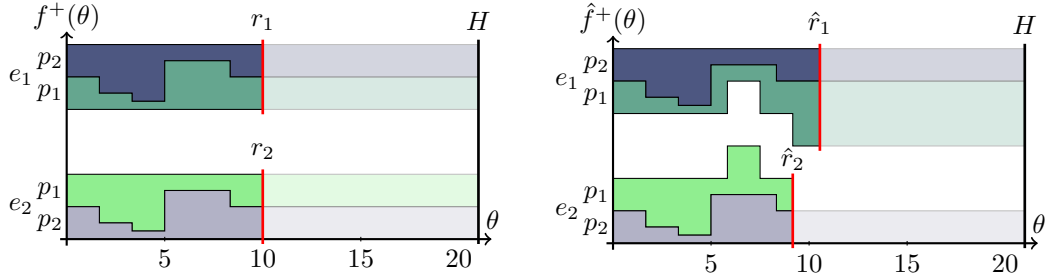
\begin{figure}[t]
\begin{subfigure}{\linewidth}
  \centering
  \begin{tikzpicture}[scale=0.9,
roundnode/.style={circle, thick, draw=black, fill=mynodecolor}
]
\node (inflow 1) at (-3.5, .3)  {$d_1 = 4$};
\node (inflow 2) at (-3.5, -.3) {$d_2 = 4$};
\node (capacity 1) at (0,0.8)  {$\nu_1 = 2$};
\node (capacity 2) at (0,-0.8) {$\nu_2 = 2$};
\node (edge 1) at (2,0.8)  {$e_1$};
\node (edge 2) at (2,-0.8) {$e_2$};
\node (transit time 1) at (4,0.8)  {$\tau_1 = 1$};
\node (transit time 2) at (4,-0.8) {$\tau_2 = 1$};
\node (time horizon) at (-2.5, 0.8) {$H = 21$};

\node[roundnode] (source)   at (-1,0) {$s$};
\node[roundnode] (sink)     at (5,0) {$t$};

\draw[->, thick] (source.east) .. controls (2,.5)  .. (sink.west);
\draw[->, thick] (source.east) .. controls (2,-.5) .. (sink.west);
\draw[->, thick] (inflow 1) -- (-1.5, .3);
\draw[->, thick] (inflow 2) -- (-1.5, -.3);

\draw[draw=white] (sink) -- (7,0);

\end{tikzpicture}
  \caption{A parallel network with two identical edges and two identical players.}
  \label{fig:standard_graph}
\end{subfigure}
  \tikzset{every picture/.style={scale=0.43}}%
  \begin{subfigure}{0.48\linewidth}  
    \centering
\begin{tikzpicture}
    
    \draw[draw=black, fill=fillcolor3] (0,6) -- (6,6) -- (6,5) -- (5,5) -- (5,5.5) -- (3,5.5) -- (3,4.25) -- (2,4.25) -- (2,4.5) -- (1,4.5) -- (1,5) -- (0,5) -- (0,6);
    \draw[draw=black, fill=fillcolor3, nearly transparent] (6,5) rectangle ++(6.6,1);
    
    \draw[draw=black, fill=fillcolor1] (0,4) -- (6,4) -- (6,5) -- (5,5) -- (5,5.5) -- (3,5.5) -- (3,4.25) -- (2,4.25) -- (2,4.5) -- (1,4.5) -- (1,5) -- (0,5) -- (0,4);
    \draw[draw=black, fill=fillcolor1, nearly transparent] (6,4) rectangle ++(6.6,1);
    
    \draw[draw=black, fill=fillcolor2] (0,2) -- (6,2) -- (6,1) -- (5,1) -- (5,1.5) -- (3,1.5) -- (3,0.25) -- (2,0.25) -- (2,0.5) -- (1,0.5) -- (1,1) -- (0,1) -- (0,2);
    \draw[draw=black, fill=fillcolor2, nearly transparent] (6,1) rectangle ++(6.6,1);
    
    \draw[draw=black, fill=fillcolor4] (0,0) -- (6,0) -- (6,1) -- (5,1) -- (5,1.5) -- (3,1.5) -- (3,0.25) -- (2,0.25) -- (2,0.5) -- (1,0.5) -- (1,1) -- (0,1) -- (0,0);
    \draw[draw=black, fill=fillcolor4, nearly transparent] (6,0) rectangle ++(6.6,1);
    
    \draw[->, thick] (-0.2,0) -- (13,0) node[right, shift={(-0.2, 0.26)}] {$\theta$};
    \draw[->, thick] (0,-0.2) -- (0,6.4) node[above, shift={(0, -0.12)}] {$f^+(\theta)$};
    
    \foreach \y/\ytext in {0.5/p_2, 1.5/p_1, 4.5/p_1, 5.5/p_2}
    \node at (-0.5, \y) {$\ytext$};
    
    \foreach \edge/\edgetext in {1/e_2, 5/e_1}
    \node at (-1.4, \edge) {$\edgetext$};
    
    \draw[draw=red, line width=1pt] (6,-0.15) -- (6,2.15) node[above] {$r_2$};
    \draw[draw=red, line width=1pt] (6,3.85) -- (6,6.15) node[above] {$r_1$};
    
    \draw[draw=black, line width = 1pt] (12.6,-0.15) -- (12.6,6.15) node[above] {$H$};

    \foreach \x/\xtext in {3/5, 6/10, 9/15, 12/20}
    \draw[shift={(\x,0)}, draw=none] (0pt,2pt) -- (0pt,-2pt) node[below] {$\xtext$};
    \foreach \x/\xtext in {3/6, 9/18}
    \draw[shift={(\x,0)}] (0pt,2pt) -- (0pt,-2pt);

\end{tikzpicture}
    \caption{Step 1: $p_1$ mirrors $p_2$'s strategy. We have $r_1 = r_2 = 10$ and both queues build up from $\theta = 0$
    onward. W.l.o.g., we have $f^+_{1,2}(\theta) \leq d_2 = 4$ for $\theta \in [10 - \varepsilon, 10)$.}
    \label{fig:mirror_strategy}
  \end{subfigure} \quad
  \begin{subfigure}{0.48\linewidth}
    \centering 
\begin{tikzpicture}
    
    \draw[draw=black, fill=fillcolor3] (0,6) -- (6.33,6) -- (6.33,5) -- (5,5) -- (5,5.5) -- (3,5.5) -- (3,4.25) -- (2,4.25) -- (2,4.5) -- (1,4.5) -- (1,5) -- (0,5) -- (0,6);
    \draw[draw=black, fill=fillcolor3, nearly transparent] (6.33,5) rectangle ++(6.27,1);
    
    \draw[draw=black, fill=fillcolor1] (0,4) -- (3.5,4) -- (3.5,5) -- (4.5,5) -- (4.5,4) -- (5.5,4) -- (5.5,3) -- (6.33,3) -- (6.33,5) -- (5,5) -- (5,5.5) -- (3,5.5) -- (3,4.25) -- (2,4.25) -- (2,4.5) -- (1,4.5) -- (1,5) -- (0,5) -- (0,4);
    \draw[draw=black, fill=fillcolor1, nearly transparent] (6.33,3) rectangle ++(6.27,2);
    
    \draw[draw=black, fill=fillcolor2] (0,2) -- (3.5,2) -- (3.5,3) -- (4.5,3) -- (4.5,2) -- (5.5,2) -- (5.5,1) -- (5,1) -- (5,1.5) -- (3,1.5) -- (3,0.25) -- (2,0.25) -- (2,0.5) -- (1,0.5) -- (1,1) -- (0,1) -- (0,2);
    
    \draw[draw=black, fill=fillcolor4] (0,0) -- (5.5,0) -- (5.5,1) -- (5,1) -- (5,1.5) -- (3,1.5) -- (3,0.25) -- (2,0.25) -- (2,0.5) -- (1,0.5) -- (1,1) -- (0,1) -- (0,0);
    \draw[draw=black, fill=fillcolor4, nearly transparent] (5.5,0) rectangle ++(7.1,1);
    
    \draw[->, thick] (-0.2,0) -- (13,0) node[right, shift={(-0.2, 0.26)}] {$\theta$};
    \draw[->, thick] (0,-0.2) -- (0,6.4) node[above, shift={(0, -0.12)}] {$\hat{f}^+(\theta)$};
    
    \foreach \y/\ytext in {0.5/p_2, 1.5/p_1, 4.5/p_1, 5.5/p_2}
    \node at (-0.5, \y) {$\ytext$};
    
    \foreach \edge/\edgetext in {1/e_2, 5/e_1}
    \node at (-1.4, \edge) {$\edgetext$};
    
    \draw[draw=red, line width=1pt] (6.33,2.85) -- (6.33,6.15) node[above] {$\hat{r}_1$};
    \draw[draw=red, line width=1pt] (5.5,-0.15) -- (5.5,2.15) node[above, shift={(0,-0.09)}] {$\hat{r}_2$};
    
    \draw[draw=black, line width = 1pt] (12.6,-0.15) -- (12.6,6.15) node[above] {$H$};

    \foreach \x/\xtext in {3/5, 6/10, 9/15, 12/20}
    \draw[shift={(\x,0)}, draw=none] (0pt,2pt) -- (0pt,-2pt) node[below] {$\xtext$};
    \foreach \x/\xtext in {3/5, 6/10, 9/15}
    \draw[shift={(\x,0)}] (0pt,2pt) -- (0pt,-2pt);

\end{tikzpicture} 
    \caption{Step 2: $p_1$ shifts $\delta$ flow units from $e_1$ to $e_2$ causing $\hat{r}_1 > 10$ and $\hat{r}_2 < 10$.
    By pumping at maximum rate into $e_1$ after $\hat{r}_2$ her payoff is $\hat\rho_1 > 40 = \Opt/2$.}
    \label{fig:winning_strategy}
  \end{subfigure}
  \caption[Construction of an always winning response strategy]{Construction of a response strategy that always yields
  strictly more than half of the total optimum in a network of two parallel edges. Here, $r_i$ and $\hat r_i$ denote the
  time when the last particles arriving at $t$ just in time, enter $e_i$.} \label{fig:counter_example}
\end{figure}

\begin{proof}[Proof of \Cref{thm:no_equilibrium}]\label{proof:no_equilibrium}
For the sake of simplicity we replace indexes $e_i$ by $i$ and $p_j$ by $j$. E.g.
$f_{1,2}$ denotes the inflow rate function of $p_2$ into~$e_1$. Furthermore, $r_1$ and  $r_2$ denote the
points in time when the last particles arriving at $t$ before $H$ enter $e_1$ and $e_2$, respectively. We will use a hat
to denote parameters that have altered with the change in strategy. For example, $\rho$ will denote the initial
and $\hat{\rho}$ the new payoff. The optimal value of a maximum flow is $\Opt = \nu_1 \cdot (H - \tau_1) +
\nu_2 \cdot (H - \tau_2)$. We show that to every strategy of player $p_2$, player $p_1$ can choose a response strategy
that yields a payoff of strictly more than half the optimum. As in turn $p_2$ can again choose a strategy with a payoff
of more than half the optimum, this immediately implies that a Nash equilibrium cannot exist.
To achieve this, player $p_1$ mirrors $p_2$'s strategy. By this we mean that $p_1$ has the same strategy as $p_2$, but
the roles of the edges are interchanged. \Cref{fig:mirror_strategy} illustrates this behavior.
\[f_{1,1}^{+}(\theta) = f_{2,2}^{+}(\theta)  \qquad \text{ and } \qquad  f_{2,1}^{+}(\theta) = f_{1,2}^{+}(\theta)     \quad\text{ for } \theta \in \horizon.\]

Both total inflow rates into the edges are equal and exceed the capacities, i.e., $f_{1,1}^+(\theta) + f_{1,2}^+(\theta) =
f_{2,1}^+(\theta) + f_{2,2}^+(\theta) > \nu_1 = \nu_2$ for $\theta \in \horizon$. It is easy to see that both payoffs are $\Opt/2$. As capacities are exceeded, 
queues build up from the very beginning. At time $r_1 = r_2 = \Opt/(d_1 +
d_2)$ the last particles that will reach $t$ in time enter the network. The response strategy of $p_1$ now consists of
reallocating a little flow to benefit from deferred $\hat{r}_1$ and $\hat{r}_2$: W.l.o.g.\ let $p_2$ send not more than
half of her supply rate $d_2$ at any time in $[r_1 - \varepsilon, r_1)$ for $\varepsilon > 0$, into $e_1$. That means
player $p_1$ sends not less than half of $d_1$ into $e_1$. (If no such $\varepsilon$ exists, due to some crazy function behavior, one can consider averages instead.) 
Player $p_1$ reallocates some $\delta > 0$ flow units from $e_1$ to $e_2$. By doing so, $\hat{r}_2$ is shifted
to $r_2 - \frac{\delta}{\nu_2}$. Meanwhile we want that $\frac{\delta}{\nu_2} \leq \varepsilon$, that the $\delta$ flow
units are taken before time $r_2 - \varepsilon$ and that the queue of $e_1$ is never empty for $\theta > 0$. Since there
has to exist $p_1$-flow of positive measure up to time $r_1$ on $e_1$ (in case the only flow $p_1$ sends into $e_1$
is within $[r_1 - \varepsilon, r_1)$ we can choose a smaller $\varepsilon$), we can clearly set $\delta$ small enough to
fulfill these conditions. As a result, $\hat{r}_1$ is postponed, i.e., $\hat{r}_1 > \hat{r}_2$. Conclusively, every flow
sent until time $\hat{r}_2$ reaches $t$ before $H$. Hence, the payoffs of both players belonging to $[0,\hat{r}_2)$ are
equal. After time $\hat{r}_2$, player~$p_1$ can pump all her flow into $e_1$, since it is still eligible to
reach $t$ before $H$. See \Cref{fig:winning_strategy} for an illustration. Therefore,
\begin{align*}
\hat{f}_{1,1}^{+}(\theta) \coloneqq d_1 \begin{cases}
> d_2 / 2 \geq f_{1,2}^{+}(\theta)     &\theta  \in [\hat{r}_2, r_1) \\
\geq f_{1,2}^{+}(\theta)         &\theta  \geq r_1.
\end{cases} 
\end{align*}
Hence, $p_1$'s inflow rate into $e_1$ is strictly greater than $p_2$'s during a time period of positive measure, which
shows that the payoff of $p_1$ is strictly larger than that of $p_2$. As the sum of the payoffs equals $\Opt$ we have that $\hat\rho_1 > \Opt/2$.
\end{proof}

\section{Information on Exit Times} \label{sec:strategy_space_with_exit_times} The absence of
a Nash equilibrium for temporal information only was mainly due to the theoretical information advantage of the
deviating player. Player $p_1$ can respond to $p_2$'s strategy, while $p_2$ does not see $p_1$'s moves and
is unable to react. As it is very natural to require inter-player reactions over time,
we extend the information by the current congestion of the edges in form of the exit times~$T(\theta) \coloneqq
(T_e(\theta))_{e \in E}$. Formally, we define
\[I_j(\theta, f) \coloneqq (\theta,T(\theta)) \in \mathcal{I} \coloneqq \horizon \times \R_{\geq 0}^E \qquad 
\text{ for all } j \in P.\]
The reason for choosing exit times over waiting times or queue sizes, which both contain the same information about the
congestion in the networks as the exit times, is that the exit times are non-decreasing. 
As the exit time $T_e(\theta) = \theta + \tau_e + \frac{1}{\nu_e} \sum_{j \in P} \left(F_{e,j}^+(\theta) - F_{e,j}^-(\theta +
\tau_e)\right)$ depends directly on the cumulative flows the equations system \eqref{eq:transformation_equation} becomes a
system of differential equations. In order to show existence and uniqueness we use the Picard-Lindel\"of theorem. For
this reason, we require the strategies to be locally Lipschitz-continuous from the right in order to ensure uniqueness.
More formally, we say a strategy $g_{e,j}$ is \emph{right-Lipschitz} if for every $I \in \mathcal{I}$ there exists an $L
> 0$ such that there exists an $\varepsilon > 0$ with $\abs{g_{e,j} (I) - g_{e,j}(I + x)} \leq L \cdot \norm{x}$ for all
$x \in [0, \varepsilon] \times [0, \varepsilon]^E$.

\paragraph{Existence and uniqueness.} \label{sec:well-definedness_uniqueness}
First we show, that in this setting every right-Lipschitz strategy profile yields a unique feasible flow over time.
\begin{theorem} \label{thm:flow_depends_on_time_and_exit_times}
    For all right-Lipschitz strategy profiles $g = (g_j)_{j \in P}$ of an atomic splittable flow over time game with information on exit times, there exists a unique feasible flow over time $f^+$ satisfying \eqref{eq:transformation_equation}.
\end{theorem}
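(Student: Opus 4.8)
The plan is to rewrite the defining system~\eqref{eq:transformation_equation}, together with the feasibility conditions~\eqref{eq:flow_conservation}--\eqref{eq:individual_outflows}, as an initial value problem for the cumulative inflows $F^+=(F^+_{e,j})_{e\in E,\,j\in P}$, and to solve it by the \emph{method of steps} along the intervals $[k\tau_{\min},(k+1)\tau_{\min})$ with $\tau_{\min}:=\min_{e\in E}\tau_e>0$, invoking a (one-sided) Picard--Lindel\"of theorem on each interval, starting from $F^+\equiv 0$ at $\theta=0$.

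The first task is to express all ingredients of~\eqref{eq:transformation_equation} causally. The outflow rates are the obstacle, since $z_e(\theta)=\sum_j F^+_{e,j}(\theta)-\sum_j F^-_{e,j}(\theta+\tau_e)$ as written refers to the future; I would remove this by adjoining the queue lengths $z=(z_e)_{e\in E}$ to the state, letting them evolve forward by $\dot z_e(\theta)=\sum_j f^+_{e,j}(\theta)-\nu_e$ while $z_e(\theta)>0$ and $\dot z_e(\theta)=\bigl(\sum_j f^+_{e,j}(\theta)-\nu_e\bigr)^{+}$ while $z_e(\theta)=0$, with $z_e(0)=0$. Then $T_e(\theta)=\theta+\tau_e+z_e(\theta)/\nu_e$ is a genuine function of the current state, and it is automatically non-decreasing along any solution because $\dot z_e\ge-\nu_e$. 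With the queue lengths at hand, the total outflow rate at time $\theta$ equals $\nu_e$ if $z_e(\theta-\tau_e)>0$ and $\min\{\sum_j f^+_{e,j}(\theta-\tau_e),\nu_e\}$ otherwise (by~\eqref{eq:operating_a_capacity}), and by~\eqref{eq:individual_outflows} the per-player outflow $f^-_{e,j}(\theta)$ is that total times player~$j$'s inflow proportion at time $\max T_e^{-1}(\theta)\le\theta-\tau_e$; since $\tau_e\ge\tau_{\min}$, all of these quantities involve only times $\le\theta-\tau_{\min}$. Substituting into~\eqref{eq:transformation_equation} yields, on each window $[k\tau_{\min},(k+1)\tau_{\min})$, an honest ODE system for $(F^+,z)$ restricted to that window whose right-hand side depends on the current state only through the factors $g_{e,j}(\theta,T(\theta))$ and the branch distinction in $\dot z_e$, and on the already-determined solution on earlier windows through the node-balance terms $\sum_{e'\in\delta^-_u}f^-_{e',j}(\theta)$; on the first window all outflow rates vanish. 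The value at $k\tau_{\min}$ is inherited by continuity.

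Next I would check the hypotheses of the existence-and-uniqueness theorem for this ODE. Measurability in $\theta$ (Carath\'eodory) is immediate from Lebesgue-integrability of the strategies and of the earlier-window data, and the inflow rates are bounded (since $g\le 1$, at most the supply rate plus the incoming capacities can arrive at a node), so there is an a priori bound, no finite-time blow-up, and each local solution extends across its window; since $H<\infty$, only $\lceil H/\tau_{\min}\rceil$ windows are needed, and the concatenation gives $(F^+,z)$ on $\horizon$. For the Lipschitz part one uses that $z\mapsto T$ is affine with nonnegative coefficients and that $g_{e,j}$ is right-Lipschitz: a forward move of the state moves the argument of $g_{e,j}$ inside the box $[0,\varepsilon]\times[0,\varepsilon]^E$, and this is turned into a Gr\"onwall estimate for the difference of two solutions by exploiting that, along any solution, $\theta$ and every $T_e(\theta)$ are non-decreasing, so two solutions with the same initial data have comparable exit-time vectors (at any would-be first time of divergence both dominate their common value), to which the one-sided bound can be applied. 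Finally I would verify that the constructed $f^+$, with the outflow rates read off above, is a feasible flow over time satisfying~\eqref{eq:transformation_equation}: flow conservation~\eqref{eq:flow_conservation} holds because the $g_{e,j}$ sum to one and the inflows are proportions of the arriving flow; the non-deficit constraints~\eqref{eq:exit_time_derivative} follow from $z_e\ge 0$, which the queue dynamics preserve since their right-hand side is nonnegative at $z_e=0$; \eqref{eq:operating_a_capacity} and~\eqref{eq:individual_outflows} hold by definition of the outflow rates; and one checks that the auxiliary $z_e$ indeed equals $\sum_j F^+_{e,j}(\theta)-\sum_j F^-_{e,j}(\theta+\tau_e)$, so $T_e$ is the true exit time. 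Uniqueness then follows because every feasible flow over time satisfying~\eqref{eq:transformation_equation} gives rise to queue lengths obeying the same forward dynamics and hence solves the same initial value problem.

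I expect the main obstacle to be making the uniqueness step rigorous: extracting from the merely right-Lipschitz strategies a Gr\"onwall-compatible estimate for the right-hand side — which is exactly where the non-decreasingness of the exit times along solutions is essential, and where a two-sided Picard--Lindel\"of statement would not apply — together with handling the discontinuity of the queue dynamics at $z_e=0$ (this discontinuity is "stabilizing", hence compatible with forward uniqueness, but this has to be verified) and the careful causality bookkeeping around $\max T_e^{-1}(\theta)$ on the plateaus of $T_e$.
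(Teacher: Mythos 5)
Your proposal follows essentially the same route as the paper: both recast \eqref{eq:transformation_equation} as an initial value problem for the cumulative flows, exploit the strict positivity of the transit times so that the outflow terms are determined by the past, and invoke Picard--Lindel\"of to step the solution forward in time. The only organizational differences are that the paper uses adaptively chosen intervals $[a,b)$ together with a transfinite-induction argument to cover all of $[0,H)$ in case the endpoints accumulate, whereas you use fixed windows of length $\tau_{\min}$ and address the merely right-Lipschitz strategies via an explicit one-sided Gr\"onwall estimate along the non-decreasing exit times --- a point (like your explicit queue-state dynamics) that you treat, if anything, more carefully than the paper does.
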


\begin{proof}
  We will construct the feasible flow over time satisfying~\eqref{eq:transformation_equation} step by step
  starting with the empty flow over time $f^+ \equiv 0$ up to time $0$. We define a \emph{restricted flow over
  time} on the interval $[0, a)$ to be a vector of Lebesgue-integrable functions $(f_{e, j}^+)_{e \in E, j \in P}$, such that all flow conditions hold for all times in $[0, a)$.
  
  Suppose we are given a unique restricted feasible flow over time satisfying~\eqref{eq:transformation_equation} on the interval $[0, a)$ for some $a \in [0,H]$. If $a=H$, we are done. It is possible to determine the queue
  lengths $z(a) = (z_e(a))_{e \in E}$, the waiting times~$q(a) = (q_e(a))_{e \in E}$ and the exit times $T(a) =
  (T_e(a))_{e \in E}$ based on $f^+|_{[0, a)}$. Hence, we can also evaluate $g(a, T(a))$. In order to further extend the
  flow over time, we specify an interval $[a,b) \subseteq \horizon$. The right endpoint $b > a$ has to be small enough
  to ensure that $g_{e,j}(\theta, T(\theta))$ is Lipschitz-continuous for $\theta \in [a,b)$ and for all $e \in E$ and $j \in P$.
  We can find such $b$ as all $g_{e,j}$ are right-Lipschitz, the exit time functions~$T_e$ are non-decreasing and continuous in $\theta$. 
  

We obtain the following initial value problem:
\begin{align*}
f_{e, j}^+(a) &= g_{e, j}(a, T(a)) \cdot C_u(a),\\
f_{e, j}^+(\theta) &= g_{e, j}(\theta, \big(\theta + \tau_{e'} + \frac{1}{\nu_{e'}} \sum_{j' \in P} F_{e',j'}^+(\theta) - F_{e',j'}^-(\theta + \tau_{e'})\big)_{e' \in E}) \cdot C_u(\theta),
\end{align*} 
for all $j \in P$ and $e = uv \in E$. Here, $C_u(\theta) \geq 0$ is the total inflow rate into node~$u$ except for $t$
where it is $0$ (as stated in \eqref{eq:transformation_equation}). Since the transit times are strictly positive $C_u \colon [a, b) \to \R$ is completely determined by the restricted
feasible flow over time on $[0, a)$ as long as $b - a < \tau_e$. In addition, $C_u$ is right-Lipschitz, therefore, we can choose $b$ small enough, such that $C_u$ is Lipschitz-continuous on $[a, b)$.
 Furthermore, if
$q_e(\theta) > 0$ we have that $F_{e',j'}^-(\theta + \tau_{e'})$ is also determined from the past as long as $b - a <
q_e(\theta)$ or in the case of $q_e(\theta) = 0$, we have that $F_{e',j'}^-(\theta + \tau_{e'}) = \sum_{j' \in P}
F_{e',j'}^+(\theta)$ as required by \eqref{eq:exit_time_derivative}.

Since the exit times $T$ depend Lipschitz-continuous on $F^+$, also the right-side depends Lipschitz-continuous on
$F^+$. Hence, the Picard-Lindel\"of theorem guarantees the existence of a unique solution. It is easy to see, that extending $f$ by this solution yields a restricted feasible flow over time on $[0, b)$. Flow conservation is fulfilled as
$\sum_{e \in \delta_v^+} g_{e, j}(\theta, T(\theta)) = 1$ and \eqref{eq:exit_time_derivative},
\eqref{eq:operating_a_capacity} and \eqref{eq:individual_outflows} are satisfied as we implicitly define the outflow
rates accordingly.

It remains to show that it is possible to repeat the process until we cover the whole interval $\horizon$. Let
$b_i$, $i = 1,2,3,\dots$, be the right endpoints during this extension process. The sequence might converge to
$\lim\limits_{i \to \infty} b_i = b_{\infty} < H$. Existence and uniqueness are thus provided on $[0, b_{\infty})$.
But this means, we can apply the extension process for $a = b_{\infty}$. As we can always continue this process from a
limit point, we can apply this transfinite induction to obtain a unique feasible flow over time on $\horizon$.
\end{proof}

This theorem shows that we obtain a well-defined atomic splittable flow over time game as long as we only consider
strategies that are not too wild. It is worth noting that right-Lipschitz functions can have infinitely many jumps and
that we cannot hope for much more general strategy-functions as argued in the following. Suppose we allow for strategies
that are not continuous from the right. We end up with the following problem: Consider a game with only one player~$p_1$
and a network consisting of two edges $e_1$, $e_2$ both from $s_{p_1}$ to $t_{p_1}$ and with $\nu_{e_1} < d_{p_1}$. The
strategy with $g_{e_1, p_1}(\theta, T(\theta)) = 1$ if $T_{e_1}(\theta) \leq \theta + \tau_{e_1}$ and $0$ otherwise,
means, that if there is no queue on $e_1$ the players sends all its flow into $e_1$ (which causes a queue to build up)
but if there is a positive queue she sends nothing (which causes any positive queue to decrease). This strategy is not
right-continuous in $T$ as $g_{e_1, 1}$ switches from $1$ to $0$ as soon as $T > \tau_{e_1} + \theta$, and clearly, it
cannot lead to a feasible flow over time.

\paragraph{Existence of Nash equilibria in parallel-edge networks.} \label{sec:existence_of_Nash_equilibrium}
Unfortunately, the task to show the existence of a Nash equilibrium in this setting turns out to be quite challenging.
For this reason we only show the existence of Nash equilibria for simple networks: For the remaining of this section we
consider parallel-edge networks with two nodes $s = s_j$ and $t = t_j$, $j \in P$. We obtain
\[\StratSpace_j = \Set{g_j: \mathcal{I} \to [0, d_j]^E | \begin{array}{l}
    g_{e,j} \text{ is right-Lipschitz for all } e \in E\\
     \text{and } \sum_{e \in E} g_{e,j}(I) = 1 \text{ for all } I \in \mathcal{I} \end{array}}.\]

This leads to an ``over time'' version of \emph{atomic splittable singleton games} as they were studied in by Harks and
Timmermans~\cite{harks2017equilibrium}. As an additional motivation, it is worth noting, that these restricted
networks, become more meaningful if, instead of a routing game, we consider a throughput-scheduling
problem~\cite{sgall2012open}. Suppose the edges represent machines on which competing players want to maximize their
throughput. The jobs can run in parallel on multiple machines up to some maximum rate of~$d_j$ and each machine has a
maximal service rate of~$\nu_e$ and individual time horizons~$H - \tau_e$. A very similar model without the
game-theoretical aspect is for example studied by Armony and Bambos~\cite{armony2003queueing}.

In order to obtain a Nash equilibrium, it is worth noting that players do not care on which edge their flow is sent, as long
as it arrives at the destination before $H$. To model a suitable strategy, we introduce the set of \emph{active edges}
$E'(T) \coloneqq \Set{ e \in E | T_e < H}$ on which flow still arrives at $t$ before~$H$, depending on the exit times.
The resulting strategy for a player~$j \in P$ on edge~$e \in E$ could look as follows:
\begin{align}
    g_{e,j}(\theta, T) \coloneqq \begin{cases}
    \frac{\nu_{e}}{\addstackgap[1.5pt]{\scriptsize$\sum\limits_{e' \in E'(T)} \nu_{e'}$ }} &\text{ if } e \in E'(T), \\
    0 &\text{ if } E'(T) \neq \emptyset \text{ and } e \notin E'(T), \\[6pt]
    \frac{\nu_e}{\addstackgap[1.5pt]{\scriptsize$\sum\limits_{e' \in E} \nu_{e'}$}} &\text{ if } E'(T) = \emptyset.
    \end{cases} \label{eq:strategy_profile_that_leads_to_equilibrium}
\end{align}
The third case is not of importance for the player, as none of the flow sent into the network will arrive in time
anymore. As $E'(T)$ stays constant for small increases of $T$, the same is true for $g_{e,j}$. Since $\sum_{e \in E}
g_{e,j}(\theta, T) = 1$, we have~$g_j \in \StratSpace_j$.

We will show that this strategy profile leads to a Nash equilibrium. For this we first show in
\Cref{lem:payoffs_equal_opt} that the given strategy profile leads to a total payoff equal to the system optimum $\Opt$
(the value of a maximum flow over time with inflow rate $\sum_{j \in P} d_j$). Afterwards, we determine in
\Cref{lem:share_of_Opt} that the payoff for each player given the strategy profile in
\eqref{eq:strategy_profile_that_leads_to_equilibrium} is a fixed share of the total payoff. Finally, we argue that none of the players has an incentive to deviate from the given strategy profile, since
shares cannot be increased.

For the remaining of this section, let $r_e$ be the point in time when the (first) particle that arrives at $t$ at time $H$
enters edge $e$. Formally, $r_e \coloneqq \min T^{-1}_e(H)$. Hence, $T_e(\theta) < H$ for any $\theta < r_e$ and
$T_e(\theta) \geq H$ for any $\theta \geq r_e$.

\begin{lemma} \label{lem:payoffs_equal_opt}
    For the strategy profile given in \eqref{eq:strategy_profile_that_leads_to_equilibrium} the sum of the payoffs of all players equals the system optimum $\Opt$.
\end{lemma}
\begin{proof} \label{proof:payoff_equal_opt}
To prove this, we allow the network to have transit times that equal $0$.
We split the set of instances into three cases

\paragraph{Case 1:} $\sum_{j \in P} d_j \geq \sum_{e \in E} \nu_e$.\\
W.l.o.g.\ we assume $\tau_e < H$; otherwise the edge would be superfluous and could be deleted.
We have $\Opt = \sum_{e \in E} \nu_e \cdot (H -\tau_e)$ as shown in \cite{skutella2009introduction}.
Since all players route their flow proportionally to the capacities of the active edges, we have 
\[\sum_{j \in P} f_{e, j}^+(\theta) = \frac{\nu_e}{\sum_{e' \in E'(T(\theta))} \nu_{e'}} \cdot \sum_{j \in P} d_j \geq \nu_e \qquad \text{ for all } \theta \text{ with } T_e(\theta) < H.\]
With this at hand, we can state the total outflow during the whole game:
\begin{align}
    \sum\limits_{j \in P} \rho_j 
        &= \sum\limits_{e \in E} \sum_{j \in P} F_{e, j}^-(H)
    = \sum\limits_{e \in E} \int_{0}^{H} \sum_{j \in P} f_{e, j}^-(\theta) \diff \theta \nonumber \\
        &= \sum\limits_{e \in E} \Big( \int_{0}^{\tau_e} 0 \diff \theta + \int_{\tau_e}^{H} \nu_e \diff \theta \Big)
    = \sum\limits_{e \in E} (H - \tau_e) \cdot \nu_e \nonumber = \Opt. \nonumber
\end{align}

\paragraph{Case 2:} $\sum_{j \in P} d_j < \sum_{e \in E} \nu_e$ and $\tau_e = 0$ for all $e \in E$. \\
We can easily see that $\Opt = H \cdot \sum_{j \in P} d_j$ and
\[\sum_{j \in J} f_{e, j}^+(\theta) = \frac{\nu_e}{\sum_{e' \in E} \nu_{e'}} \cdot \sum_{j \in P} d_j < \nu_e \qquad \text{ for all } \theta < H.\]
Since no queues build up, all edges stay active for all $\theta \in \horizon$. Therefore, it holds that
\begin{align*}
    \sum\limits_{j \in P} \rho_j 
        = \sum\limits_{e \in E} \int_0^H \sum_{j \in P} f_{e, j}^-(\theta) \diff \theta 
    = \sum\limits_{e \in E} \sum\limits_{j \in P} H \!\cdot\! d_j \!\cdot\! \frac{\nu_e}{\sum_{e' \in E} \nu_{e'}} 
        = H \!\cdot\! \sum\limits_{j \in P} d_j
    = \Opt.
\end{align*}

\paragraph{Case 3.} $\sum_{j \in P} d_j < \sum_{e \in E} \nu_e$ and there exists an $e \in E$ with $\tau_e > 0$. \\
We assume that $\tau_{e} < H$ for all $e \in E$. Let $e^*$ be the first edge to drop out of $E'(T(\theta))$, i.e.,
$r_{e^*}$ is minimal among all $r_e$. As in the first phase no queues build up (see Case 2), we have $r_{e^*} = H -
\tau_{e^*}$, which means that $\tau_{e^*}$ is maximal among all $\tau_e$. Emphasis should be put on the fact that the
whole flow sent into the network up to time $r_{e^*}$ arrives at $t$ in time; a volume of $\sum_{j \in P}d_j
\cdot (H - \tau_{e^*})$ in total. Hence, the system optimum cannot perform any better until $r_{e^*}$.
To show that after time $r_{e^*}$, the summed payoffs correspond to the system optimum as well, we reduce the remaining
instance. First, we remove $e^*$ from the set of edges, i.e., $\hat{E} = E \setminus \set{ e^* }$. Second, we shift the time
axis $r_{e^*}$ time units back. By that we mean that the new time $0$ corresponds to $r_{e^*}$ in the former instance
and $\hat{H} = H - r_{e^*} = \tau_{e^*}$. Everything else remains untouched, in particular all queues are empty at
$r_{e^*}$. This instance is strictly smaller, which means that eventually the reduction process must end because either
we obtain $\sum_{j \in P} d_j \geq \sum_{e \in \hat{E}} \nu_e$ (Case 1), $\tau_e = 0$ for all $e \in \hat{E}$ (Case 2)
or we reach $|\hat{E}| = 1$ in which case the total payoff trivially equals $\Opt$.
\end{proof}


\begin{lemma} \label{lem:share_of_Opt}
    For the strategy profile given in \eqref{eq:strategy_profile_that_leads_to_equilibrium}, the payoff of player $j$ is given by
\[\rho_{j} = \Opt \cdot \frac{d_{j}}{\sum_{j' \in P} d_{j'}}.\]
\end{lemma}

\begin{proof}
For $\theta \in [0, r_e)$ we have $\sum_{j \in P} f_{e,j}^+(\theta) = \sum_{j \in P} d_j \cdot g_{e,j}(\theta, T(\theta)) > 0$.
We want to examine the outflow rates for $\phi \in [\tau_e, H)$.
With \eqref{eq:individual_outflows} and $\theta = \max T_e^{-1}(\phi) < r_e$ we obtain for all $e \in E$ and $j \in P$ that
\begin{align*}
f_{e,j}^-(\phi) &= \frac{f_{e,j}^+(\theta)}{\sum_{j' \in P} f_{e, j'}^+(\theta)} \cdot \sum_{j' \in P} f_{e, j'}^-(\phi) \\
&= \frac{d_j \cdot g_{e,j}(\theta, T(\theta))}{\sum\limits_{j' \in P}d_{j'} \cdot g_{e,j'}(\theta, T(\theta))} \cdot \sum_{j' \in P} f_{e, j'}^-(\phi)
 =\frac{d_{j}}{\sum\limits_{j' \in P} d_{j'}} \cdot \sum_{j' \in P} f_{e, j'}^-(\phi).
\end{align*}
Taking the integral over $[0,H]$, summing over all edges $e \in E$ and using \Cref{lem:payoffs_equal_opt}, yields
\[\rho_j = \sum\limits_{e \in E} F_{e, j}^-(H) 
= \frac{d_{j}}{\sum\limits_{j' \in P} d_{j'}} \cdot \sum\limits_{e \in E} \sum_{j' \in P} F_{e, j'}^-(\phi) 
= \frac{d_{j}}{\sum\limits_{j' \in P} d_{j'}}  \cdot \Opt.\]
\vspace*{-\baselineskip}\qedhere
\end{proof}


With these two lemmas we can finally prove the following theorem.

\begin{theorem} \label{thm:strategy_assures_known_network_share}
The strategy profile $(g_j)_{j \in P}$ given by \eqref{eq:strategy_profile_that_leads_to_equilibrium}
is a Nash equilibrium.
\end{theorem}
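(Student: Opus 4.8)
The plan is to check the equilibrium condition player by player. Fix $j \in P$ and assume every other player $j' \neq j$ uses her strategy $g_{j'}$ from \eqref{eq:strategy_profile_that_leads_to_equilibrium}. Write $D \coloneqq \sum_{j' \in P} d_{j'}$ and $D_{-j} \coloneqq D - d_j$. By \Cref{lem:payoffs_equal_opt} and \Cref{lem:share_of_Opt}, if player $j$ also plays $g_j$ her payoff is exactly $\rho_j = \Opt \cdot d_j / D$, so it suffices to show that no $\hat g_j \in \StratSpace_j$ yields player $j$ more than $\rho_j$ against $g_{-j}$. Let $\hat f$ be the feasible flow over time induced by $(\hat g_j, g_{-j})$ — which exists and is unique by \Cref{thm:flow_depends_on_time_and_exit_times}, since $(\hat g_j, g_{-j})$ is again a right-Lipschitz profile on a parallel-edge network — and let $\hat\rho_{j'}$ denote the resulting payoffs. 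As $j$ is arbitrary, establishing $\hat\rho_j \le \rho_j$ for all $\hat g_j$ completes the proof.

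The first step is an upper bound on the total throughput. The superposition $\bigl(\sum_{j' \in P} \hat f^+_{e,j'}, \sum_{j' \in P} \hat f^-_{e,j'}\bigr)_{e \in E}$ is a feasible flow over time on the parallel network with the constant total supply rate $D$ entering at $s$; hence the volume it delivers to $t$ before time $H$ is at most the value of a maximum flow over time with supply rate $D$ and horizon $H$, i.e.\ at most $\Opt$. Consequently $\hat\rho_j + \sum_{j' \neq j} \hat\rho_{j'} \le \Opt$. The second, and crucial, step is a matching lower bound on the volume delivered by the non-deviating players: \emph{for every $\hat g_j$, the players in $P \setminus \{j\}$ together deliver at least $\Opt \cdot D_{-j}/D$ to $t$ before $H$.} Granting this, we obtain $\hat\rho_j \le \Opt - \Opt\cdot D_{-j}/D = \Opt \cdot d_j/D = \rho_j$, which is exactly what is needed.

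For the lower bound I would exploit the shape of $g_{-j}$: by \eqref{eq:strategy_profile_that_leads_to_equilibrium} the players $j' \neq j$ send, at every time $\theta$, the rate $\bigl(\nu_e / \sum_{e' \in E'(T(\theta))} \nu_{e'}\bigr) d_{j'}$ into each active edge $e \in E'(T(\theta))$ and nothing into inactive edges, so their combined inflow into every active edge is proportional to that edge's capacity (with common factor $D_{-j}/\sum_{e' \in E'(T(\theta))} \nu_{e'}$) and they never place flow on an edge from which it can no longer reach $t$ in time. From here one argues as in the proof of \Cref{lem:payoffs_equal_opt}, with a case distinction between the regimes where the total supply does or does not exceed the active capacity and reducing the instance each time an edge becomes inactive: in the saturated regime every active edge operates at capacity, and although player $j$ may crowd the non-deviators' proportional share of a shared queue, any such over-crowding forces that edge to become inactive correspondingly earlier, so the non-deviators redistribute onto the remaining active edges and their aggregate delivered volume stays at the proportional bound; in the unsaturated regime no queues form and all non-deviator flow injected while its edge is still active arrives in time. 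Using that the maximum-flow-over-time value is concave in the supply rate and vanishes at rate $0$ — transparent for parallel edges, where one fills edges greedily in order of transit time — so that the value at rate $D_{-j}$ is at least $D_{-j}/D$ times the value at rate $D$, this book-keeping yields the claimed $\Opt \cdot D_{-j}/D$. This interaction analysis, namely controlling how the deviator's flow and the shrinking set of active edges jointly affect the non-deviators' delivered volume across shared queues, is the main obstacle; once it is in place, the two steps combine to give $\hat\rho_j \le \rho_j$ as above, so the profile $(g_j)_{j \in P}$ is a Nash equilibrium.
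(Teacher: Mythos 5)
Your overall decomposition -- bound the total delivered volume by $\Opt$ from above, bound the non-deviators' delivered volume from below, and subtract -- is close in spirit to the paper's proof, and your first step (the superposition is a feasible flow over time with supply rate $D$, hence delivers at most $\Opt$) is correct. But your ``crucial'' second step is false: it is \emph{not} true that against every deviation $\hat g_j$ the non-deviators deliver at least $\Opt \cdot D_{-j}/D$. Concretely, take two parallel edges with $\nu_{e_1} = \tfrac12$, $\tau_{e_1} = 1$, $\nu_{e_2} = 1$, $\tau_{e_2} = 9$, horizon $H = 11$, and two players with $d_1 = d_2 = \tfrac12$, so $D = 1$ and $\Opt = 6$. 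Let player $2$ deviate to $\hat g_{e_1,2} \equiv 1$. On $[0,2)$ player $1$ sends $\tfrac16$ into $e_1$ and $\tfrac13$ into $e_2$, so $e_1$ receives total rate $\tfrac23 > \nu_{e_1}$ and its queue grows at rate $\tfrac16$, while $e_2$ stays queue-free and becomes inactive at time $2$. From then on both players pour rate $1$ into $e_1$, whose queue grows at rate $\tfrac12$, giving $T_{e_1}(\theta) = 2\theta - \tfrac13$ and hence $r_{e_1} = \tfrac{17}{3} < 6 = \Opt/D$. The non-deviator always routes into active edges, so she delivers exactly $d_1 \cdot r_{e_1} = \tfrac{17}{6} < 3 = \Opt\cdot d_1/D$. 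The intuition in your sketch -- that over-crowding an edge makes it die ``correspondingly earlier'' so the non-deviators recover their proportional volume by redistributing -- is exactly what breaks: the redistribution lands on edges of smaller residual capacity, and the lost time is not recovered. (The concavity of the max-flow-over-time value in the supply rate is true but irrelevant here, because the non-deviators do not get to route through a private network at rate $D_{-j}$; they share the deviator's queues.) Note that in this example the deviator also only gets $\tfrac{17}{6} < 3$, so the theorem itself is not threatened -- only your lower bound is.

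The repair, which is what the paper actually does, is to bound the deviator's \emph{relative share} of the realized total rather than the non-deviators' \emph{absolute} volume. Let $r_{\max}$ be the time the last edge becomes inactive under the deviated flow and let $V$ be the total volume delivered. Every non-deviator $j'$ sends her full rate $d_{j'}$ into active edges throughout $[0,r_{\max})$ and all of that flow arrives in time (the proportional merging rule \eqref{eq:individual_outflows} preserves her share through the queues), so she delivers exactly $d_{j'} r_{\max}$; on the other hand nothing entering the network after $r_{\max}$ arrives in time, so $V \le D\, r_{\max}$. Hence the deviator's payoff is at most $V - D_{-j} r_{\max} \le V - (D_{-j}/D)\,V = (d_j/D)\,V \le (d_j/D)\,\Opt$, which by \Cref{lem:payoffs_equal_opt,lem:share_of_Opt} equals her payoff under $g_j$. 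This sidesteps any claim about $r_{\max}$ versus $\Opt/D$, which, as the example shows, can fail.
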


\begin{proof}
We want to observe what happens if one player $j^*$ deviates from her strategy in an ex ante manner. For this let $r \coloneqq \max_{e \in E} r_e$ be the point in time when the last edge becomes inactive for the modified strategy-profile and let $\rho \coloneqq \sum_{j \in P} \rho_j$ be the total amount of flow arriving in time. Clearly, $\rho \leq \Opt$.

During $[0, r]$ every non-deviating player $j \in P \setminus \set{j^*}$ sends all her flow into edges that are still active, and therefore, all this flow arrives in time. Hence, her payoff is given by $\rho_j = d_j \cdot r$.

Since no flow that enters after $r$ can possibly arrive in time, player $j^*$ can achieve at most a payoff of $d_{j^*} \cdot r$.  It follows that her share of the total amount of flow arriving in time in upper bounded by $\frac{d_{j^*}}{\sum_{j \in P} d_j}$, i.e.,
\[\rho_{j^*} \leq \rho \cdot \frac{d_{j^*}}{\sum_{j \in P} d_j} \leq \Opt \cdot \frac{d_{j^*}}{\sum_{j \in P} d_j}.\]
By \Cref{lem:share_of_Opt} the right-side equals the payoff of player $j^*$ when choosing $g_{j^*}$ as strategy. Hence, she cannot improve her payoff by deviating, which shows that $(g_j)_{j \in P}$ is indeed an equilibrium.
\end{proof}

We observe the following: As long as the players constantly choose from the set of active egdes and the summed payoff
equals $\Opt$, i.e., no queue runs dry in case of a restricted network, the players' payoffs stay the same. Hence, there
is a whole class of Nash equilibria. 

Next we want to show that this characterizes all possible Nash equilibria, which
implies that the price of anarchy is $1$.

\begin{theorem}
For atomic splittable flow over time games on parallel-edge networks where exit times are provided as information with
right-Lipschitz strategies, the price of anarchy is~$1$.
\end{theorem}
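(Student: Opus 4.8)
The plan is to pin down both quantities in $\PoA = \Opt\,/\,\min_{\text{NE }g}\sum_{j\in P}\rho_j(g)$. First I would observe that for \emph{every} strategy profile $g$ the superposition $\big(\sum_{j\in P} f_{e,j}^{+}\big)_{e\in E}$ of its induced flow over time is a feasible single-commodity flow over time on the parallel network that injects the constant total rate $\sum_{j\in P} d_j$ at $s$ from time $0$ onwards; by maximality of $\Opt$ this yields $\sum_{j\in P}\rho_j(g)\le\Opt$. Since \Cref{lem:payoffs_equal_opt} exhibits a profile attaining $\Opt$, the social optimum equals $\Opt$ and every equilibrium has social welfare at most $\Opt$, so $\PoA\ge1$. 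It remains to show that no Nash equilibrium can have social welfare strictly below $\Opt$; all of the work is here.

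The crucial structural fact I would use is that a flow particle which some player injects into an edge $e$ at time $\theta$ leaves $e$ again at exactly time $T_e(\theta)$, because its queueing delay $z_e(\theta)/\nu_e$ is fully determined by the flow present in the queue at time $\theta$ and is unaffected by later inflow. Consequently, any flow routed into an edge $e$ at a time $\theta$ with $T_e(\theta)<H$ reaches $t$ in time. Now suppose $g$ is a Nash equilibrium with $W\coloneqq\sum_{j\in P}\rho_j(g) < \Opt$. Since $\Opt\le H\sum_{j\in P}d_j$, some player $j^{*}$ has $\rho_{j^{*}}(g) < d_{j^{*}} H$, so on a positive-measure set of times she injects flow into an edge $e$ with $T_e(\theta)\ge H$, i.e.\ flow that never arrives. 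I distinguish two cases. In case~(a), on a positive-measure subset of these ``wasted'' times an active edge $e'\in E'(T(\theta))$ is simultaneously available. In case~(b), for almost every wasted time \emph{all} edges are already dead, i.e.\ $E'(T(\theta))=\emptyset$.

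In case~(a) I would let $j^{*}$ deviate by diverting a small portion of the flow she pours into the dead edge $e$ onto an active edge $e'$ at those times: the diverted flow now leaves at $T_{e'}(\theta)<H$ and arrives, sending less into $e$ costs nothing, and the only side effect --- a small increase of the queue on $e'$, hence a small delay of $j^{*}$'s remaining flow there --- is outweighed by the freshly arriving volume once the diverted portion is chosen small enough (and, if necessary, concentrated on times where $T_{e'}$ has slack below $H$, or compensated by simultaneously relieving $j^{*}$'s other flow on $e'$). In case~(b) --- which, since case~(a) otherwise applies, we may take to mean that \emph{no} player ever pours flow into a dead edge while an active one exists --- every player's flow injected before $r_{\max}\coloneqq\min\Set{\theta | E'(T(\theta))=\emptyset}$ arrives in time, so $\rho_j(g)=d_j\,r_{\max}$ for all $j$ and $W=\big(\sum_{j\in P}d_j\big)r_{\max}$. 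The equilibrium profile of \Cref{lem:payoffs_equal_opt} has social welfare $\Opt=\big(\sum_{j\in P}d_j\big)r_{\max}^{\mathrm{eff}}$ for the same reason (it too only routes to active edges), so $W<\Opt$ forces $r_{\max} < r_{\max}^{\mathrm{eff}}$: some edge dies strictly earlier under $g$ than it could, which can only stem from avoidable queueing earlier in the game. Concretely, there is a positive-measure set of times at which some active edge $e'$ has empty queue and unused capacity while another active edge $e''$ carries total inflow above its capacity. Any player routing flow through $e''$ there can then shift a little of it onto $e'$; the shifted flow passes $e'$ without forming a queue and arrives in time, while shortening $e''$'s queue only speeds up her remaining flow on $e''$. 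In every case some player strictly improves, contradicting that $g$ is a Nash equilibrium; hence every Nash equilibrium has social welfare $\Opt$, and $\PoA = 1$.

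The main obstacle is turning the two deviations into rigorous strict improvements. Because each strategy $g_j$ reads the exit-time vector $T(\theta)$ and the exit times depend on the entire past flow, a unilateral deviation perturbs the \emph{whole} flow over time through this feedback loop, so one must control how the active-edge sets $E'(T(\theta))$ and the queue lengths shift under the perturbation and verify that the exploited slack survives. The quantitative balance ``freshly arriving volume beats induced delay'' in case~(a) is delicate --- a naive infinitesimal diversion can fail when $d_{j^{*}}$ is large relative to $\nu_{e'}$ --- and one must rule out the degenerate situation in which a diversion is merely payoff-neutral for $j^{*}$, which is precisely why the argument keys on a player whose equilibrium payoff is strictly below $d_j H$, forcing genuine waste. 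Finally, making the implication ``premature edge death $\Rightarrow$ avoidable queueing $\Rightarrow$ exploitable slack'' in case~(b) precise requires a quantitative comparison with the efficient flow of \Cref{lem:payoffs_equal_opt}.
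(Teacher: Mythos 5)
Your overall architecture (social optimum equals $\Opt$ via \Cref{lem:payoffs_equal_opt} plus a superposition bound, then showing no equilibrium falls short of $\Opt$) is sound, but your case split differs from the paper's and the two places where you yourself flag difficulties are genuine gaps, not routine details. In case (a) the problem you name is fatal as stated: diverting volume $\delta$ onto an active edge $e'$ lengthens its queue by up to $\delta/\nu_{e'}$ in time, which can push up to $d_{j^*}\delta/\nu_{e'}$ of $j^*$'s \emph{own} later flow on $e'$ past the deadline, so for $d_{j^*}>\nu_{e'}$ the naive diversion is a net loss; the remedies you sketch in parentheses are never carried out. In case (b) two steps are asserted without proof: first, that $r_{\max}<r^{\mathrm{eff}}_{\max}$ forces a positive-measure coincidence of an underused active edge with an over-capacitated one; second, that the resulting shift is a \emph{strict} improvement for the shifting player. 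Under your case-(b) hypothesis all of her pre-$r_{\max}$ flow already arrives, so she gains only if the shift delays the death of the \emph{last-dying} edge and this gain survives the feedback of the other players' $T$-dependent strategies; if the congested edge $e''$ is not the critical one, the shift is payoff-neutral for her.

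The paper's proof avoids exactly this bookkeeping with a device your proposal is missing: the share argument. If a player routes only into active edges, all of her flow injected before the last edge dies arrives, while the total arriving volume is at most the total volume injected before that time; hence her payoff is at least $d_j/\sum_{j'}d_{j'}$ times the total payoff, \emph{whatever} that total is. Combined with the monotonicity fact that moving inflow from inactive to active edges never decreases the total payoff, the paper's case distinction (some player's share below $d_j/\sum_{j'}d_{j'}$, versus all shares fair but total below $\Opt$) yields strict improvements without ever comparing a deviator's gained volume against her self-inflicted queueing losses. You should either import that share lemma or close your two quantitative gaps explicitly; as written the proposal does not yet constitute a proof.
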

\begin{proof}
The key observation is that every player $j$'s share of the total payoff $\sum_{j' \in P} \rho_{j'}$ is at least $d_j /
\sum_{j' \in P} d_{j'}$ as long as the player only sends flow into active edges $E'(T(\theta))$. Also, the total payoff
is never decreased when a player shifts inflow from an inactive edge to an active edge, since flow sent into inactive
edges does not arrive in time. Note here that due to the flow dynamics, the cumulative outflow function of an edge
depends non-decreasingly on the cumulative inflow function of the edge (more precisely, $F_{j, e}^-(\theta) \geq \hat
F_{j, e}^-(\theta)$ for all $\theta \in [0, H)$ if $F_{j, e}^+(\theta) \geq \hat F_{j, e}^+(\theta)$ for all $\theta \in
[0, H)$). Suppose we have a strategy profile $g$ such that the total payoff $\sum_{j \in P} \rho_j$ is strictly smaller
than $\Opt$. Either one of the players' shares of the total payoff is strictly less than $d_j / \sum_{j' \in P} d_{j'}$,
so this player can improve, or all players have a share of $d_j / \sum_{j' \in P} d_{j'}$. But then there has to be an
edge~$e$ where flow is wasted, which means that the queue on $e$ is empty and the total inflow rate is strictly smaller
than the capacity for a time span of positive measure when $e$ is still active; instead, flow is sent into inactive
edges or edges with a queue, which hinders later flow to get to $t$ in time. Hence, $j$ can improve by shifting flow
from an inactive edge or edge with a queue, as this increases the total payoff but does not decrease her share. Hence,
in both cases~$g$ was not a Nash equilibrium, which shows that the price of anarchy is~$1$.
\end{proof}

\section{Further Research} \label{sec:final}

The topic opens up a multitude of further research directions. First of all, either proving or disproving the existence
of Nash equilibria for more general networks for games with information on the exit times. As the exit times do not
cover all information that might be ne\-ces\-sa\-ry for the players to react, a responding player might have an
advantage. This is especially critical in games with non-symmetric players. To illustrate this difficulty consider the
network in \Cref{fig:more_complex_network}.

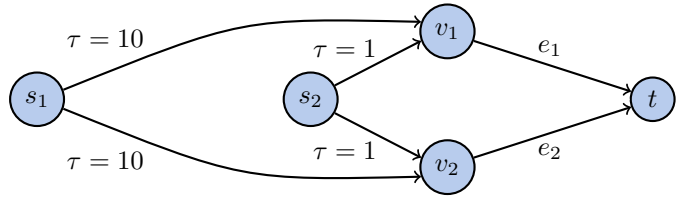
\begin{figure}[t]
    \centering
    \begin{tikzpicture}[scale=0.9,
roundnode/.style={circle, thick, draw=black, fill=mynodecolor}
]
\node (transit time 11) at (0,0.9)  {$\tau = 10$};
\node (transit time 12) at (0,-0.9) {$\tau = 10$};
\node (transit time 21) at (3.5,0.75)  {$\tau = 1$};
\node (transit time 22) at (3.5,-0.75) {$\tau = 1$};
\node (e1) at (6.5,0.75)  {$e_1$};
\node (e2) at (6.5,-0.76) {$e_2$};

\node[roundnode] (source1)   at (-1,0) {$s_1$};
\node[roundnode] (source2)   at (3,0) {$s_2$};
\node[roundnode] (v1)   at (5,1) {$v_1$};
\node[roundnode] (v2)   at (5,-1) {$v_2$};
\node[roundnode] (sink)     at (8,0) {$t$};

\draw[->, thick] (source1.20) .. controls (2,1.2)  .. (v1.160);
\draw[->, thick] (source1.-20) .. controls (2,-1.2) .. (v2.-160);
\draw[->, thick] (source2.30) -- (v1.200);
\draw[->, thick] (source2.-30) -- (v2.-200);
\draw[->, thick] (v1.-20) -- (sink.160);
\draw[->, thick] (v2.20) -- (sink.-160);

\draw[draw=white] (sink) -- (7,0);

\end{tikzpicture}
    \caption{A network with two non-symmetric players. We have $t = t_1 = t_2$. Player~$1$ must commit to a split of her
    inflow rate at time $\theta$ before knowing the relevant information on the congestion on $e_1$ and $e_2$ at time
    $\theta + 10$. Player $2$ might use this to her advantage, which could lead to the non-existence of Nash equilibria.}
    \label{fig:more_complex_network}
\end{figure}

For this reason an interesting research direction would be to identify more general network
classes for which a Nash equilibrium still exists. Symmetric games where all players share the same origin and the same
destination might be a necessary restriction.

Additionally, the dependencies among different objective functions have not yet been understood very well: Shedding
light on whether the results for the maximum flow over time objective translate to a quickest flow objective (maybe via
binary-search) would be very interesting. It might even be possible to consider the average arrived flow or the average
arrival time as payoff functions, which could then be compared to an earliest arrival flow.

Finally, cooperative and
non-cooperative models might be mixed in order to assess how coalitions and selfish particles behave together (as e.g.\
in \cite{Catoni_Pallottino_1991} for the static case).

\newpage

\bibliography{literature}


\end{document}